\pdfoutput=1
\documentclass{article}
\usepackage{spconf}

\usepackage[T1]{fontenc}
\usepackage[utf8]{inputenc}
\usepackage{amsmath,amssymb,amsthm}
\usepackage{mathtools}
\usepackage{bm}
\usepackage{booktabs,array}
\usepackage{graphicx}
\usepackage{balance}
\usepackage[inline]{enumitem}

\usepackage{siunitx}
\usepackage{caption}
\usepackage{subcaption}
\usepackage[%
  backend = biber,%
  maxbibnames = 10,%
  minbibnames = 2,%
  style = ieee,%
  citestyle = numeric-comp,%
  bibencoding = utf8,%
  sortcites = true,%
  isbn = false,%
  url = false,%
  doi = false,
  eprint = true,%
]{biblatex}

\usepackage{software-biblatex}

\usepackage[ruled,linesnumbered]{algorithm2e}

\makeatletter
\def\@ssect#1#2#3#4#5{\begingroup\bf\centering
  {\interlinepenalty\@M\uppercase{#5}\par}\endgroup\@xsect{#3}}
\renewcommand\section{\@startsection{section}{1}{\z@}%
  {-1\baselineskip \@plus -.5ex \@minus -.2ex}{1\baselineskip}{\normalfont\bfseries}}
\renewcommand\subsection{\@startsection{subsection}{2}{\z@}%
  {-1\baselineskip \@plus -.5ex \@minus -.2ex}{.3ex}{\normalfont\bfseries}}

\defbibheading{bibliography}{%
  \addvspace{1\baselineskip}
  \@ssect{\z@}{-1\baselineskip \@plus -.5ex \@minus
    -.2ex}{1\baselineskip}{\normalfont\bfseries}{REFERENCES}%
}

\makeatother
\usepackage[hidelinks]{hyperref}

\usepackage{cleveref}
\crefname{thm}{theorem}{theorems}
\Crefname{thm}{Theorem}{Theorems}
\crefname{prop}{proposition}{propositions}
\Crefname{prop}{Proposition}{Propositions}

\crefname{section}{section}{sections}
\Crefname{section}{Section}{Sections}

\graphicspath{{figures/}}

\newtheorem{proposition}{Proposition}

\usepackage{cleveref}
\crefname{section}{Sec.}{Secs.}
\Crefname{section}{Section}{Sections}
\crefname{subsection}{Sec.}{Secs.}
\Crefname{subsection}{Section}{Sections}
\crefname{figure}{Fig.}{Figs.}
\Crefname{figure}{Fig.}{Figs.}
\crefname{table}{Table}{Tables}
\Crefname{table}{Table}{Tables}
\crefname{proposition}{Prop.}{Props.}
\Crefname{proposition}{Proposition}{Propositions}
\AtBeginDocument{%
  \crefname{algorithm}{Algorithm}{Algorithms}%
  \Crefname{algorithm}{Algorithm}{Algorithms}}
\crefformat{equation}{(#2#1#3)}
\Crefformat{equation}{(#2#1#3)}
\crefrangeformat{equation}{(#3#1#4)--(#5#2#6)}
\Crefrangeformat{equation}{(#3#1#4)--(#5#2#6)}
\crefmultiformat{equation}{(#2#1#3)}{ and (#2#1#3)}{, (#2#1#3)}{, and (#2#1#3)}
\Crefmultiformat{equation}{(#2#1#3)}{ and (#2#1#3)}{, (#2#1#3)}{, and (#2#1#3)}

\usepackage{tikz}
\definecolor{icfixedfill}{HTML}{DCE6F2}
\definecolor{icfixedline}{HTML}{2D4C73}
\definecolor{ictrainfill}{HTML}{F6DEDE}
\definecolor{ictrainline}{HTML}{9C3030}
\definecolor{icwire}{HTML}{7A8BA3}
\newcommand{\icx}{\tikz[baseline=-0.55ex]{%
  \draw[icwire,line width=0.5pt]
    (0.10,0.085) -- (0.28,0.085) (0.10,0) -- (0.28,0)
    (0.10,-0.085) -- (0.28,-0.085);
  \node[draw,line width=0.6pt,rounded corners=1.2pt,fill=white,
    inner sep=1.3pt,font=\tiny] at (0,0) {$\mathbf{x}$};}}
\newcommand{\icF}{\tikz[baseline=-0.55ex]{%
  \draw[icwire,line width=0.5pt]
    (-0.28,0.085) -- (0.28,0.085) (-0.28,0) -- (0.28,0)
    (-0.28,-0.085) -- (0.28,-0.085);
  \node[draw=icfixedline,line width=0.6pt,rounded corners=1.2pt,fill=white,
    inner sep=1.1pt,font=\tiny] at (0,0) {$\mathbf{F}_{N}$};}}
\newcommand{\icH}{\tikz[baseline=-0.55ex]{%
  \draw[icwire,line width=0.5pt] (-0.22,0) -- (0.22,0);
  \node[draw=icfixedline,line width=0.6pt,rounded corners=1.2pt,
    fill=icfixedfill,inner sep=1.3pt,font=\tiny] at (0,0) {$\mathbf{H}$};}}
\newcommand{\icCP}{\tikz[baseline=-0.55ex]{%
  \draw[icwire,line width=0.5pt] (-0.28,0) -- (0.28,0);
  \draw[icfixedline,line width=0.6pt] (0,-0.16) -- (0,0);
  \fill[icfixedline] (0,-0.16) circle (0.8pt);
  \node[draw=icfixedline,line width=0.6pt,rounded corners=1.2pt,
    fill=icfixedfill,inner sep=1.1pt,font=\tiny] at (0,0) {$\bm{\Phi}$};}}
\newcommand{\icM}{\tikz[baseline=-0.55ex]{%
  \draw[icwire,line width=0.5pt] (-0.24,0) -- (0.24,0);
  \draw[ictrainline,line width=0.6pt] (0,-0.16) -- (0,0);
  \fill[ictrainline] (0,-0.16) circle (0.8pt);
  \node[draw=ictrainline,line width=0.6pt,rounded corners=1.2pt,
    fill=ictrainfill,inner sep=1.3pt,font=\tiny] at (0,0) {$\bm{\Phi}$};}}

\definecolor{orcidgreen}{HTML}{A6CE39}
\newcommand{\orcidlink}[1]{\href{https://orcid.org/#1}{\tikz[baseline=-3.2pt]{%
  \fill[orcidgreen] (0,0) circle (6.5pt);
  \node[white,inner sep=0,font=\fontsize{9}{9}\selectfont\sffamily\bfseries]
    at (0.3pt,0) {iD};}}}

\makeatletter
\def\tightdisplayskips{%
  \setlength{\abovedisplayskip}{2pt plus 2pt minus 1pt}%
  \setlength{\belowdisplayskip}{2pt plus 2pt minus 1pt}%
  \setlength{\abovedisplayshortskip}{1pt plus 2pt}%
  \setlength{\belowdisplayshortskip}{1pt plus 2pt minus 1pt}}
\g@addto@macro\normalsize{\tightdisplayskips}
\g@addto@macro\small{\tightdisplayskips}
\makeatother

\newcommand{\R}{\mathbb{R}}

\newcommand{\Herm}{\mathsf{H}}
\newcommand{\Tr}{\mathsf{T}}
\newcommand{\Net}{\mathcal{T}}
\newcommand{\A}{\mathcal{A}}
\newcommand{\Hard}{\mathcal{H}}
\newcommand{\norm}[1]{\lVert #1\rVert}

\title{QUANTUM-INSPIRED TRAINABLE AND PARAMETER-EFFICIENT\\ TENSOR NETWORKS FOR IMAGE
  INPAINTING\vspace{-10pt}}

\name{Shiwen An\,\orcidlink{0000-0002-4727-2906} and Konstantinos
  Slavakis\,\orcidlink{0000-0002-3370-3154}\vspace{-10pt}}

\address{\small Institute of Science Tokyo, Department of Information and Communications
  Engineering, Yokohama, Japan\vspace{-10pt}}

\begin{document}
\ninept
\maketitle


\begin{abstract}
  This work introduces quantum-inspired tensor-network circuits as trainable transforms for image inpainting. Among the proposed architectures, the diagonal quantum Fourier transform (QFT) relaxation is invertible with $O(N^2 \log N)$ computational cost for $N\times N$ images, inherently preserving minimum coherence throughout training via its circuit structure and eliminating the need for explicit coherence penalties. Unconstrained gradient-based phase optimization (Riemannian-optimization free) enables efficient learning from randomly sampled training data, allowing the learned transform to generalize to test images observed through fixed sampling masks. Numerical tests show that the learned models outperform fixed transforms and per-image optimization while matching the performance of much larger unitary architectures, yet with far fewer parameters. \end{abstract}

\begin{keywords}
  Quantum, image inpainting, transforms, tensor network.
\end{keywords}

\section{Introduction}
\label{sec:intro}

Image inpainting reconstructs a digital image from a partial set of observed entries, inferring missing
values by assuming a prior model of the image's structure. \textit{Sparse recovery}\/ exploits
transform-domain sparsity models, using wavelets~\cite{mallat2008} or learned
dictionaries~\cite{aharon2006, ravishankar2013closed}, with recovery guarantees~\cite{candes2006,
  blumensath2009, guleryuz2006}. Low-rank matrix completion capitalizes on row-column correlations in the
image matrix, with theoretical guarantees under incoherence and sampling
conditions~\cite{candes2009matrix, candes2010power}. Deep-learning approaches include trained inpainting
networks~\cite{suvorov2022lama}, diffusion-based methods~\cite{lugmayr2022repaint}, and plug-and-play
solvers with learned denoising priors~\cite{sreehari2016pnp}. Other methods optimize representations
\textit{directly on individual test images,} including deep image priors~\cite{ulyanov2018} and
tensor-train approaches, such as low-rank decomposition~\cite{bengua2017tt} and coarse-to-fine
refinement~\cite{loeschcke2024}.

For \textit{transform-based recovery,} sparsity alone does not determine which transform to use. A basis
atom (column vector) of the transform localized to a few pixels does not appear in the transform-domain
signal representation whenever samples miss its support. Coherence $\mu(\cdot)$ measures the squared
maximum modulus of correlation between basis atoms and the standard basis---see
\eqref{eq:coh}---quantifying the localization of atoms in the image domain. The complexity of recovery is
known to grow with coherence~\cite{candes2007}---a fundamental tension in compressed
sensing~\cite{adcock2017}. Empirically, wavelets (highly localized bases, thus high coherence) achieve
sparser approximations than the DFT, yet produce worse inpainting results under the same solver---see
\cref{sec:exp}. This gap motivates the design of \textit{trainable transforms with low coherence.}  Since
sparsity also varies across bases, coherence is an indicative but not the sole factor determining
inpainting performance.

Tensor networks offer a quantum-inspired and compact way to parameterize transforms---see
\cref{fig:net}. Writing pixel indices in binary represents an image as a tensor with one mode per index
bit~\cite{oseledets2011, khoromskij2011}. A tensor network of unitary gates on these modes defines a
basis shared across images. The basis retains two key properties of the FFT: fast application with
$O(N^2\log N)$ for $N\times N$ images, and invertibility via its adjoint. The quantum Fourier transform
(QFT) circuit~\cite{nielsen2010} achieves both through a factorization into Hadamard and controlled-phase
gates. This circuit relaxes the diagonal gates while retaining one Hadamard per bit. Every resulting
matrix yields coherence $\mu = 1$, ensuring minimum coherence and unitarity under adaptation. Learnable
butterfly factorizations~\cite{dao2019} also adapt a fast transform, but their relaxed blocks do not
guarantee minimum coherence.

Building on QFT-gate relaxations developed for compression~\cite{pdft2026}, where tensor networks were
trained via Riemannian optimization, this work extends those circuits into image inpainting. More
specifically, the proposed contributions are threefold.

\begin{enumerate}[label = \textbf{(\alph*)}, leftmargin = *, wide, topsep = 0pt, itemsep = 0pt, parsep =
    0pt]

\item \textbf{First application to image inpainting.} Quantum-inspired tensor-network circuits are
  adapted to image inpainting via gradient-based phase optimization for both phase-only and diagonal
  relaxations (Riemannian-optimization free unlike~\cite{pdft2026})---see \cref{sec:method}.

\item \textbf{Guaranteed coherence.} The diagonal QFT relaxation and its phase-only subfamily guarantee
  minimum coherence throughout training via the circuit topology, without requiring an explicit coherence
  penalty. This structural guarantee distinguishes the proposed tensor networks from both unconstrained
  learned dictionaries and from QFT circuits with free Hadamards, whose coherence can exceed the minimum
  bound (\cref{prop:flat}).

\item \textbf{Parameter-efficiency and performance.} The proposed learned diagonal model outperforms
  fixed transform baselines and per-image optimization schemes while matching the performance of much
  larger unitary architectures, yet with far fewer parameters (\cref{sec:exp}).

\end{enumerate}

\begin{figure*}[t]
  \centering
  \input{figures/network.tex}
  \caption{The construction of \cref{sec:family} for $n=4$ per axis, depicted as four tensor networks
    progressing from specific to general (left to right), each applied to a copy of the vectorized
    image $\mathbf{x} \coloneqq \mathrm{vec}(\mathbf{X})$. %
    \textbf{(a)}~The separable 2D DFT, $(\mathbf{F}_N \otimes \mathbf{F}_N)\, \mathbf{x}$, from
    Step~1a. %
    \textbf{(b)}~One Cooley--Tukey level: the radix-2 decomposition \cref{eq:radix2} of Step~1b. %
    \textbf{(c)}~The circuit $\Net(\bm{\theta})$ of \cref{eq:pdft} up to the bit reversal $\mathbf{B}$, with trainable
    diagonals (red) and fixed Hadamards (blue). It is unitary by \cref{eq:unitary} and satisfies
    $\mu = 1$ at every $\bm{\theta}$ by \cref{prop:flat}. %
    \textbf{(d)}~The analysis map $\A_{\bm{\theta}}$ of Step~1d.}%
  \label{fig:net}
\end{figure*}

\section{The Image-Inpainting Problem}
\label{sec:problem}

An image $\mathbf{X} \in \R^{N\times N}$ is observed partially, with entries known only on a subset of
pixel indices $\Omega$. The observation operator $P_{\Omega} \colon \R^{N\times N} \to \R^{N\times N}$
models this by $\mathbf{Y} \coloneqq P_{\Omega} \mathbf{X} \coloneqq \mathbf{M}_{\Omega} \odot
\mathbf{X}$, where $\mathbf{M}_{\Omega} \in \{0, 1\}^{N\times N}$ is a binary mask with ones at observed
locations $\Omega$ and zeroes elsewhere, and $\odot$ denotes the Hadamard product. The sampling mask
$\mathbf{M}_{\Omega}$ is viewed as a random variable (RV) to account for all possible sampling
patterns. Typically, the entries of $\mathbf{M}_{\Omega}$ are modeled as independent Bernoulli RVs with
$p$ being the probability of success (appearance of $1$s), also called \textit{sampling rate,} so that
$pN^2$ pixels are observed on average. Image inpainting recovers $\mathbf{X}$ from the partial
observations $\mathbf{Y}$.

This work introduces tensor networks $\Net( \cdot )$---see \cref{fig:net}---for transform-based image
inpainting. Unlike fixed classical transforms such as the DCT, the parameters $\bm{\theta}$ of $\Net(
\bm{\theta} )$ are learned from training data $\mathcal{D}$ by minimizing the loss $\mathcal{L}(\cdot)$:
\begin{align}
  \bm{\theta}_{\star} & \in \operatorname*{arg\,min} \nolimits_{ \bm{\theta}}\ \mathcal{L} ( \bm{\theta}
  ), \notag \\
  \mathcal{L} ( \bm{\theta} ) & \coloneqq \frac{1}{ \lvert \mathcal D \rvert } \sum\nolimits_{ \mathbf{X}
    \in\mathcal D} \mathbb{E}_{\Omega} \! \left \{ \tfrac{1}{N^2} \norm{ \widehat{\mathbf{X}}_K (
    \bm{\theta}; \mathbf{Y}, \Omega) - \mathbf{X} }_{\text{F}}^2 \right \} \,, \label{eq:task}
\end{align}
where expectation $\mathbb{E}_{\Omega}\{ \cdot \}$ averages over random sampling masks
$\mathbf{M}_{\Omega}$, and $\widehat{ \mathbf{X} }_K( \bm{\theta}; \mathbf{Y}, \Omega)$ denotes the
$K$-step recovered image from \cref{alg:train}. To manage computational budgets, stochastic/online
gradient descent solves \eqref{eq:task} by sampling a small batch of training images per step (to
approximate the sum over all training data) and randomly drawing a sampling mask per image (to
approximate $\mathbb{E}_{\Omega}\{ \cdot \}$)---see \cref{alg:train}. Once $\bm{\theta}_{\star}$ is
learned, the network $\Net( \bm{\theta}_{\star} )$ inpaints a test image $\mathbf{X}_{\text{test}}
\notin \mathcal{D}$ observed through a fixed sampling mask $\Omega_{\text{test}}$ via the
$\texttt{Recover} ( P_{ \Omega_{\text{test}} } \mathbf{X}_{\text{test}}, \Omega_{\text{test}},
\bm{\theta}_{\star} )$ function of \cref{alg:train}.

As explained in \cref{sec:intro}, coherence $\mu(\cdot)$ quantifies the localization of basis atoms. For
$\mathbf{U} = [ u_{ij} ] \in U(N)$ (where $U(N)$ is the set of complex-valued $N\times N$ unitary
matrices), coherence is defined as:
\begin{equation}
  \mu( \mathbf{U} ) \coloneqq N \max_{i,j}\, \lvert \mathbf{e}_i^{\Tr} \mathbf{u}_j  \rvert^2 = N
  \max_{i,j}\, \lvert u_{ij}  \rvert^2 \in[1,N] \,, \label{eq:coh}
\end{equation}
where $\{ \mathbf{e}_i \}_{i=1}^N$ is the standard basis of $\R^{N}$ and $\mathbf{u}_j$ is the $j$th
column of $\mathbf{U}$. Under sparsity and suitable sampling conditions, the sample complexity for
recovery grows linearly with $\mu$~\cite{candes2007}, motivating the design of transforms with low
coherence. For images---which have both row and column structure---the 2D-coherence $\mu_{\mathrm{2D}}$
is defined as the product of coherences along each dimension. Correspondingly, the tensor-network
parameter vector $\bm{\theta} = ( \bm{\theta}_r, \bm{\theta}_c)$ factorizes into row and column
components, respecting this 2D structure.

\section{The Tensor Networks}
\label{sec:method}

This section writes the DFT as a quantum Fourier circuit (\cref{sec:family}), frees its gates one
structure at a time (\cref{sec:shared}), trains the result through the recovery iteration
(\cref{sec:unroll}), and shows the two properties the design rests on, isometry at every parameter value
(\cref{sec:iso}) and minimum coherence (\cref{sec:pinned}), which single out QFT (diagonals) as the
recommended model.

\subsection{An FFT-factorized isometric family}
\label{sec:family}

Read the index of a length-($N=2^n$) axis ($n \ge 2$) as an $n$-bit string $b_0 b_1 \cdots b_{n-1}$,
$b_0$ the most significant bit, and draw one wire per bit; the $n$ wires of an axis form a register. The
network acts on the vectorized image $\mathbf{x} = \mathrm{vec}( \mathbf{X} )$ \icx{}, the rows of
$\mathbf{X}$ stacked, and a transform is a network of small unitary
tensors on those wires~\cite{nielsen2010, pdft2026}. Throughout, $\otimes$ is the Kronecker product,
and every product of gates is the ordinary matrix product, written in the order applied, the rightmost
first. \cref{fig:net} draws the four steps.

\textbf{Step 1a.} The separable 2D DFT attaches one dense $\mathbf{F}_N$ \icF{}, $( \mathbf{F}_N )_{jk}
= e^{2\pi ijk/N} / \sqrt{N}$, to each register: $\mathbf{X} \mapsto \mathbf{F}_N \mathbf{X}
\mathbf{F}_N^{\Tr}$, i.e., $( \mathbf{F}_N \otimes \mathbf{F}_N)\, \mathbf{x}$. No gate joins the two.

\textbf{Step 1b.} The Hadamard on wire $q$ is $\mathbf{H}_q = \mathbf{I}_{2^{q}} \otimes \mathbf{H}
\otimes \mathbf{I}_{2^{\,n-1-q}}$ \icH{}, with $\mathbf{H} \coloneqq \frac{1}{\sqrt{2}} \bigl[
\begin{smallmatrix} 1 & 1 \\ 1 & -1 \end{smallmatrix} \bigr]$. The controlled phase on the wire pair
$(p, q)$, $\bm{\Phi}_{pq} (\xi)$ \icCP{}, is the phase gate $\mathrm{diag} (1, 1, 1, e^{i\xi})$ on wires
$p$ and $q$ and $\mathbf{I}_2$ on every other wire. One level of the
Cooley--Tukey recursion factors each dense block into a Hadamard on the most significant bit, a
controlled phase coupling it to each remaining bit $p$ at the angle $\pi/2, \pi/4, \ldots, 2\pi/2^{n}$
in turn, and a half-size $\mathbf{F}_{N/2}$ on the rest, the angle of the pair $(p, q)$ being $\xi_{pq} = 2\pi/2^{\,p-q+1}$. In matrix form this level is
the classical radix-2 identity~\cite{vanloan1992}
\begin{equation}
  \mathbf{F}_N = \mathbf{S}\,( \mathbf{I}_2 \otimes \mathbf{F}_{N/2}) \prod_{p=1}^{n-1} \bm{\Phi}_{p0}
  (\xi_{p0})\, \mathbf{H}_0 \,, \label{eq:radix2}
\end{equation}
with $\mathbf{S}$ the permutation matrix that moves bit $0$ of the index to the last place, $b_0 b_1
\cdots b_{n-1} \mapsto b_1 \cdots b_{n-1} b_0$; the $\bm{\Phi}$ factors are diagonal and commute.

\textbf{Step 1c.} Recursing until $\mathbf{F}_2 = \mathbf{H}$ leaves nothing dense. Level $q$ contributes
$\mathbf{H}_q$ followed by its controlled phases $\mathbf{D}_q = \prod_{p=q+1}^{n-1} \bm{\Phi}_{pq}
(\xi_{pq})$, and its move of bit $q$ to the last place, $\mathbf{S}_q = \mathbf{I}_{2^q} \otimes
\mathbf{S}$ with $\mathbf{S}$ on the trailing $n-q$ bits; the moves compose into the bit reversal
$\mathbf{B} = \mathbf{S}_0 \mathbf{S}_1 \cdots \mathbf{S}_{n-2}$. The levels compose in circuit order,
the $q=0$ factors rightmost and applied first: $\mathbf{F}_N = \mathbf{B}\, ( \mathbf{D}_{n-1}
\mathbf{H}_{n-1} ) \cdots ( \mathbf{D}_0 \mathbf{H}_0)$, the quantum Fourier circuit~\cite{nielsen2010}.

\textbf{Step 1d.} Freeing the angle of every controlled phase, $\xi_{pq} \mapsto \theta_{pq}$, gives the
trainable network
\begin{equation}
  \Net( \bm{\theta} ) \coloneqq \mathbf{B}\, \big( \mathbf{D}_{n-1} ( \bm{\theta} ) \mathbf{H}_{n-1}
  \big) \cdots \big( \mathbf{D}_0 ( \bm{\theta} ) \mathbf{H}_0 \big) \,, \label{eq:pdft}
\end{equation}
with $\mathbf{D}_q ( \bm{\theta} ) = \prod_{p=q+1}^{n-1} \bm{\Phi}_{pq} (\theta_{pq})$, $\bm{\theta}$ the
parameter vector of \cref{sec:problem} (its entries per model in \cref{sec:shared}), and $\Net(
\bm{\theta}^0 ) = \mathbf{F}_N$ at $\theta_{pq} = \xi_{pq}$. Every factor is unitary at every parameter
value, so
\begin{equation}
  \Net( \bm{\theta} )^{\Herm} \Net( \bm{\theta}) = \mathbf{I}\,, \quad\forall \bm{\theta} \,.
  \label{eq:unitary}
\end{equation}
The analysis map applies $\Net( \bm{\theta} )^\Herm$ on each register (\cref{fig:net}(d)); with
separate angles per axis,
\begin{equation}
  \A_{ \bm{\theta} } ( \mathbf{X} ) \coloneqq \Net( \bm{\theta}_r )^{\Herm} \mathbf{X}\, \overline{\Net(
    \bm{\theta}_c )}\,,\, \A_{ \bm{\theta} }^{-1} ( \mathbf{C} ) = \Net( \bm{\theta}_r ) \,
  \mathbf{C} \, \Net( \bm{\theta}_c )^{\Tr} \,, \label{eq:2d}
\end{equation}
the second being the exact inverse of the first by \cref{eq:unitary}.

\subsection{A ladder of relaxations}
\label{sec:shared}

Each model frees one structure of \cref{eq:pdft}, and $\bm{\theta}$ collects the free parameters of
whichever model is meant:
\par\smallskip
\setlength{\tabcolsep}{2.5pt}%
\centerline{%
  \resizebox{\columnwidth}{!}{%
    \begin{tabular}{@{}lccc@{}}
      \toprule
      & QFT (phases) & QFT (diagonals) & QFT (rotations)\\
      \midrule
      pair diagonal & $(1, 1, 1, e^{i\xi})$ & $(e^{i\theta_1}, \ldots, e^{i\theta_4})$ &
      $(e^{i\theta_1}, \ldots, e^{i\theta_4})$\\
      wire gate & $\mathbf{H}$ & $\mathbf{H}$ & $\mathbf{V} \in U(2)$\\
      count per axis & $n(n-1)/2$ & $2n(n-1)$ & $2n(n-1)+4n$\\
      \bottomrule
    \end{tabular}
}} \smallskip\noindent%
QFT (phases) is \cref{eq:pdft} as written. QFT (diagonals) frees the other three
phases of each pair, one per value $(b_p, b_q)$ of the two bits \icM{}, initialized at $(0, 0, 0,
\xi_{pq})$, and pinning them at zero recovers QFT (phases). QFT (rotations) frees the Hadamards too, each
$\mathbf{V}_q$ initialized at $\mathbf{H}$ and held on $U(2)$ by a Cayley retraction~\cite{wen2013} as
in~\cite{pdft2026}, and can leave the complex Hadamard set. Every replacement is unitary, so
\cref{eq:unitary} holds for all three, and all start at the DFT and minimize \cref{eq:task}, each
adjacent comparison in \cref{tab:reversal}'s last block isolating one freedom.

Tying the phase vectors of QFT (diagonals) by gate distance, $\bm{\theta}_{pq} = \bm{\psi}_{p-q}$ as in
the DFT rule $\xi_{pq}$ of Step~1b, gives $4(n-1)$ parameters per axis and, with the DFT phases at new
distances, a basis at every resolution within \cref{prop:flat}'s hypothesis (\cref{sec:pinned}).

\SetAlgoNlRelativeSize{0}
\begin{algorithm}[t]
  \SetCommentSty{textrm}
  \DontPrintSemicolon
  \SetAlgoNoEnd
  \SetKwProg{Fn}{function}{:}{}
  \SetKwFunction{FRec}{Recover}
  \KwIn{$\mathcal{D}$, $p$, $k$, $K$, $T$ (\cref{sec:exp})}
  \Fn{\FRec{$\mathbf{Y}, \Omega, \bm{\theta}$}}{
    $\mathbf{X}^{(0)} \gets \mathbf{Y}$\;
    \lFor{$\kappa = 0, \dots, K-1$}{\\
      Update $\mathbf{X}^{(\kappa + 1)}$ by \eqref{eq:iht}
    }
    \KwRet $\widehat{ \mathbf{X} }_K( \bm{\theta}; \mathbf{Y}, \Omega) \coloneqq \mathbf{X}^{(K)}$\;
  }
  $\bm{\theta} \gets \bm{\theta}^0$\tcp*[r]{the DFT}
  \For{$\tau =1, \dots, T$}{
    Draw $\mathbf{X} \in \mathcal{D}$ and a fresh mask as in \cref{sec:problem}\;
    $\widehat{ \mathbf{X} }_K \gets{}$\FRec{$P_{\Omega} \mathbf{X}, \Omega, \bm{\theta}$}\;
    $\bm{\theta} \gets \mathrm{Adam} (\bm{\theta}, \nabla_{ \bm{\theta} } \norm{\widehat{ \mathbf{X} }_K
      - \mathbf{X} }_{ \text{F} }^2 / N^2 )$
    \tcp*[r]{support fixed}
  }
  \KwOut{$\bm{\theta}_{\star} \gets \bm{\theta}$, with $\mu_{\mathrm{2D}} = 1$
    (\cref{prop:flat})}
  \caption{Recovery and training for the phase-only and diagonal models}
  \label{alg:train}
\end{algorithm}

\subsection{Recovery as an unrolled map}
\label{sec:unroll}

Define the hard-thresholder $\Hard_k$ to retain coefficients at or above the $k$th largest magnitude,
including ties at the cutoff. Starting from $\mathbf{X}^{(0)} = \mathbf{Y}$, the iterative update is:
\begin{equation}
  \mathbf{X}^{(\kappa + 1)} \coloneqq P_{\Omega} \mathbf{Y} + (\mathrm{Id} - P_{\Omega} ) \,
  \mathrm{Re}\{\, \A_{ \bm{\theta} }^{-1} (\, \Hard_k ( \A_{ \bm{\theta} } ( \mathbf{X}^{(\kappa)} ))\,
  )\, \}\,. \label{eq:iht}
\end{equation}
This update alternates transform-domain hard thresholding with data consistency on the observed
pixels~\cite{blumensath2009, guleryuz2006}. The output matches observed entries but is not constrained to
remain $k$-sparse. The $K$th iterate yields $\widehat{ \mathbf{X} }_K( \bm{\theta}; \mathbf{Y}, \Omega)$
from \cref{eq:task}. Training requires derivatives with respect to $\bm{\theta}$; since these arise from
the iteration rather than a closed form, the iteration itself is differentiated. Fixing $K$ unfolds the
iteration into a finite computation graph---a piecewise differentiable map from gate angles to images,
which defines algorithm unrolling~\cite{monga2021}. Away from support changes, $\Hard_k( \mathbf{C} ) =
\mathbf{M}_{k} \odot \mathbf{C}$ for a fixed binary mask $\mathbf{M}_{k}$, with derivative
$D\Hard_k( \mathbf{C} )[\Delta \mathbf{C}] = \mathbf{M}_{k} \odot \Delta
\mathbf{C}$. Differentiation propagates through the retained support values via reverse mode over $K$
iterations, rematerializing at each step.

\subsection{Isometry without manifold optimization}
\label{sec:iso}

The fully relaxed network searches the product manifold $U(2)^{n} \times \big(U(1)^4\big)^{n(n-1)/2}$
per axis~\cite{pdft2026}. The diagonal relaxation pins the $U(2)$ factors at $\mathbf{H}$
and leaves a torus, parameterized by the periodic, redundant angles through $\theta\mapsto
e^{i\theta}$. Every Adam step~\cite{kingma2015} on these angles preserves gate unitarity, with no need
for Riemannian tangent-space projections and retractions (\cref{alg:train}). The diagonal gates at each
stage commute and combine into one diagonal, so applying the $n$ Hadamards and $n$ diagonals costs
$O(N^2\log N)$ per image, for the map or its adjoint, with no dense $\Net$ formed.

\subsection{Coherence is pinned}
\label{sec:pinned}

\begin{proposition}
  \label{prop:flat}
  Let $\mathbf{U} = \mathbf{B}\, \mathbf{G}_L \cdots \mathbf{G}_1$, where $\mathbf{B}$ is a permutation
  matrix and every factor $\mathbf{G}_l$ is either a unitary diagonal matrix or the Hadamard
  $\mathbf{H}_q$ of Step~1b, each wire carrying exactly one Hadamard factor. Then $\lvert
  (\mathbf{U})_{ij} \rvert = 1 / \sqrt{N}$, $\forall (i, j)$, so that $\sqrt{N}\, \mathbf{U}$ is a
  complex Hadamard matrix~\cite{tadej2006} with $\mu( \mathbf{U} ) = 1$. In particular $\mu = 1$ on QFT
  (diagonals) and its phase-only subfamily, tied (\cref{sec:shared}) or untied.
\end{proposition}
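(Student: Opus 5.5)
Since $\mathbf{B}$ is a permutation, it only rearranges the rows of $\mathbf{G}_L \cdots \mathbf{G}_1$, so it suffices to show $\lvert (\mathbf{G}_L \cdots \mathbf{G}_1)_{ij}\rvert = 1/\sqrt{N}$ for all $(i,j)$. The plan is to track a single column, i.e., the image $\mathbf{G}_L \cdots \mathbf{G}_1 \mathbf{e}_j$ of a standard basis vector, indexed by bit strings $b = b_0 \cdots b_{n-1}$. I will prove by induction on $l$ the following invariant. After applying $\mathbf{G}_l \cdots \mathbf{G}_1$ to $\mathbf{e}_j$, let $Q_l$ be the set of wires whose Hadamard has already been applied. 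Then the vector is supported exactly on those strings $b$ whose bits outside $Q_l$ agree with the corresponding bits of $j$, and every entry on this support has modulus $2^{-\lvert Q_l\rvert/2}$. At $l=0$ we have $Q_0 = \emptyset$ and the vector is $\mathbf{e}_j$, so the invariant holds.

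For the inductive step there are two cases. \textbf{Diagonal factor.} A unitary diagonal multiplies each entry by a unimodular scalar, so neither the support nor the moduli change. \textbf{Hadamard $\mathbf{H}_q$.} Since each wire carries exactly one Hadamard, $q \notin Q_{l-1}$, so bit $q$ is frozen at $j_q$ on the current support. The operator $\mathbf{H}_q$ acts on the pair of strings differing only in bit $q$. On each such pair exactly one entry is nonzero, with modulus $2^{-\lvert Q_{l-1}\rvert/2}$, and the other is zero because it lies off the support. Applying $\frac{1}{\sqrt{2}}\bigl[\begin{smallmatrix} 1 & 1 \\ 1 & -1 \end{smallmatrix}\bigr]$ to a vector $(a,0)$ or $(0,a)$ gives $(a,\pm a)/\sqrt{2}$. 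Hence both strings become nonzero with modulus $2^{-(\lvert Q_{l-1}\rvert+1)/2}$: bit $q$ is freed and $Q_l = Q_{l-1}\cup\{q\}$. The key point is that no two nonzero amplitudes are ever summed, so no cancellation can occur.

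After all $L$ factors we have $Q_L = \{0,\dots,n-1\}$, so every entry of the column has modulus $2^{-n/2} = 1/\sqrt{N}$. This gives $\lvert u_{ij}\rvert = 1/\sqrt{N}$ for all $(i,j)$, so $\sqrt{N}\,\mathbf{U}$ has unimodular entries. It is also unitary up to scaling by \cref{eq:unitary}, so it is a complex Hadamard matrix, and \cref{eq:coh} gives $\mu(\mathbf{U}) = N \cdot 1/N = 1$.

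For the final claim, \cref{eq:pdft} has exactly the form in the hypothesis. The $\bm{\Phi}_{pq}$ gates and the relaxed pair diagonals $(e^{i\theta_1},\dots,e^{i\theta_4})$, embedded as $\mathbf{I}_2$ on the other wires, are unitary diagonals. There is one $\mathbf{H}_q$ per wire, and $\mathbf{B}$ is a permutation. Tying parameters only restricts $\bm{\theta}$, so the conclusion holds for tied and untied models alike. For images, $\mu_{\mathrm{2D}}$ is the product of the per-axis coherences, so $\mu_{\mathrm{2D}} = 1$.

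The main obstacle is stating the support invariant precisely: it is what rules out interference. It relies on the exactly-one-Hadamard hypothesis, since a second Hadamard on the same wire could merge two nonzero amplitudes and break flatness. I would also remark that this is why QFT (rotations) with a general $\mathbf{V}_q$ can exceed $\mu = 1$: its $\lvert\mathbf{V}_q\rvert$ entries need not all equal $1/\sqrt{2}$.
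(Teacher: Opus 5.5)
Your proof is correct and complete. The paper omits its own proof (``omitted due to lack of space''), so there is nothing to match line by line.

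Your support-tracking induction is the column-by-column form of the standard sum-over-paths argument. Every diagonal factor leaves the computational-basis string unchanged, and $\mathbf{H}_q$ is the only factor that can change bit $q$. Hence exactly one string path leads from $\mathbf{e}_j$ to any given output string $b$, and each entry of $\mathbf{G}_L\cdots\mathbf{G}_1$ is a single product of $n$ Hadamard entries $(-1)^{j_q b_q}/\sqrt{2}$ and unimodular diagonal entries.

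The two phrasings buy slightly different things:
\begin{itemize}
\item The path-sum version gives each entry explicitly, phase included. That is useful if one also wants the phase structure, for instance to recover the DFT phases at $\bm{\theta}^0$.
\item Your invariant tracks only support and modulus. That is all the proposition needs, and it makes the no-cancellation mechanism and the role of the exactly-one-Hadamard hypothesis very transparent.
\end{itemize}

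Two small wording fixes:
\begin{itemize}
\item The sentence ``on each such pair exactly one entry is nonzero'' holds only for pairs that meet the current support. Pairs whose bits outside $Q_{l-1}\cup\{q\}$ disagree with $j$ have both entries zero and stay zero. That is precisely what makes the new support equal to the claimed set.
\item For the general $\mathbf{U}$ of the proposition, unitarity follows because $\mathbf{B}$, the diagonals and the $\mathbf{H}_q$ are each unitary. It does not come from \cref{eq:unitary}, which the paper states only for $\Net(\bm{\theta})$.
\end{itemize}
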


\begin{proof}
  The proof is omitted due to lack of space.
\end{proof}

\cref{prop:flat} guarantees minimum coherence throughout training via the circuit topology alone, without
requiring an explicit coherence penalty. Unconstrained learned dictionaries lack such guarantees. The QFT
(rotations) architecture falls outside this guarantee because its free parameters $\mathbf{V}_q$ replace
the prescribed Hadamards, allowing coherence to exceed the minimum bound of \num{1}---as observed in
\cref{sec:whichres}.

\section{Numerical Tests and Discussion}\label{sec:exp}

\begin{figure*}[t]
  \centering
  \includegraphics[width=.9\textwidth]{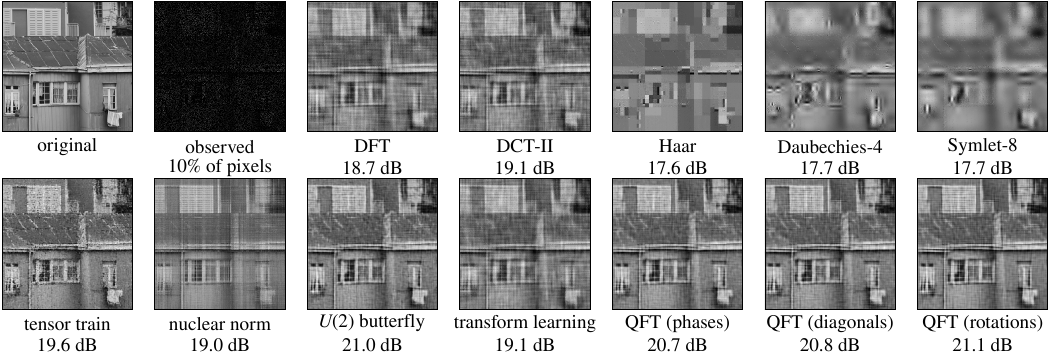}
  \caption{A DIV2K test image at sampling rate $p = 10\%$ and all methods in \cref{tab:reversal}, each at
    its PSNR-optimal budget.}
  \label{fig:example}
\end{figure*}

\subsection{Setup}

Numerical tests use DIV2K~\cite{agustsson2017}, grayscale $512 \times 512$ crops: \num{750} for training,
\num{50} for validation, and \num{100} test images. Each test image carries a single mask, drawn once
randomly and then fixed, shared by every method and every budget in this paper, and there is one training
seed throughout.

Each method is evaluated at its per-image budget optimum over sparsity levels $k/m \in \{0.015, 0.03,
0.0625, 0.125, 0.25, 0.5\}$, where $m$ is the observed pixel count. All transforms are evaluated at
$K=300$ iterations; the QFT models train at $K=100$, $T=200$ steps over mini-batches of two images,
$k=m/8$. Learning rates are tuned via validation-set PSNR over seven candidates, QFT (rotations)
also over five Cayley steps. In \cref{tab:reversal}, $\mu$ denotes coherence; ``Fitted on''
indicates what each method uses to train its parameters (nothing, test images, or training images). All
baselines are implemented in JAX~\cite{jax2018}.

\subsection{Comparison with fixed transforms and per-image methods}

\begin{table}[!b]
  \centering
  \caption{Completion on $100$ DIV2K test images at $512 \times 512$ and $p=10\%$: fixed bases, per-image
    fits, trained transforms, and proposed QFT networks}
  \label{tab:reversal}
  \setlength{\tabcolsep}{0.75pt}
  \newcommand{\hd}[2]{\multicolumn{1}{c}{\begin{tabular}[b]{@{}r@{}}#1\\#2\end{tabular}}}
  \resizebox{\columnwidth}{!}{%
    \begin{tabular}{@{}lrcrrrr@{}}
      \toprule
      Method & \hd{Trainable}{params} & \hd{Fitted}{on} & $\mu$ & PSNR & SSIM &
      \hd{MS-}{SSIM}\\
      \midrule
      DFT $=\Net( \bm{\theta}^0)$ & 0 & --- & {\bfseries 1} & 20.72 & 0.443 & 0.677\\
      DCT-II & 0 & --- & 4 & 21.15 & 0.454 & 0.693\\
      Haar \cite{mallat2008} & 0 & --- & 65536 & 18.70 & 0.453 & 0.582\\
      Daubechies-4 \cite{mallat2008} & 0 & --- & 68453 & 19.23 & 0.472 & 0.616\\
      Symlet-8 \cite{mallat2008} & 0 & --- & 95640 & 19.28 & 0.477 & 0.625\\
      \midrule
      Tensor train \cite{loeschcke2024} & [34--402]k & test img & --- & 22.13 & 0.524 & 0.791\\
      Nuclear norm \cite{candes2009matrix, jain2010svp, toh2010apg} & [1--109]k & test img & --- & 18.61 &
      0.369 & 0.612\\
      \midrule
      $U(2)$ butterfly \cite{dao2019} & 18432 & tr.\ data & 2.70 & 22.91 & 0.551 & 0.798\\
      Transform learning \cite{ravishankar2013closed} & 524288 & tr.\ data & 18642 & 21.03 & 0.453 & 0.691\\
      \midrule
      QFT (phases) & 72 & tr.\ data & {\bfseries 1} & 22.62 & 0.532 & 0.781\\
      QFT (diagonals) & 288 & tr.\ data & {\bfseries 1} & 22.74 & 0.537 & 0.786\\
      QFT (rotations) & 360 & tr.\ data & 1.20 & {\bfseries 22.98} & {\bfseries 0.558} & {\bfseries
        0.799}\\
      \bottomrule
    \end{tabular}
  }
\end{table}

QFT (diagonals) outperforms the DFT by $2.02$dB and the best fixed transform by $1.59$dB in PSNR,
and leads both in PSNR and MS-SSIM on all $100$ test images. \cref{fig:example} shows all methods on a
representative test image. The bases split by coherence into global (DFT-like) and localized
(wavelet-like) families, the global family leading on PSNR and MS-SSIM~\cite{wang2003}, with
single-scale SSIM~\cite{wang2004} favoring the wavelets.

Per-image methods are not transforms. The tensor train~\cite{loeschcke2024} trails by $0.60$dB in PSNR
but leads by $0.005$ in MS-SSIM. Among the trained transforms, the butterfly
factorization~\cite{dao2019} (blocks constrained to $U(2)$) leads by $0.17$dB at $64\times$ the
parameters, but its unconstrained variant loses $8$dB under a single-budget protocol; transform
learning~\cite{ravishankar2013closed} achieves $0.1$dB below the DCT.

\cref{fig:results}(a,\,b) compares methods across sampling rates, with the transforms trained at
$p=10\%$ applied unchanged. QFT (diagonals) leads the DFT at all rates on both metrics. The tensor
train leads in PSNR only at $1\%$ and in MS-SSIM up to $10\%$; the butterfly leads by $0.2$--$0.3$dB up
to $10\%$, matches at $20\%$, and trails above. \cref{fig:results}(c,\,d) analyzes computational cost.
QFT (diagonals) scales as $N^{1.7}$ and costs two-fifths of the tensor train at $512^2$ (which
scales as $N^{0.6}$; a rank cap). QFT (diagonals) requires $288$ trainable reals at $512^2$,
compared to $18{,}432$ (butterfly) and $524{,}288$ (transform learning). Training costs $741$s,
equivalent to $\sim 570$ tensor-train fits.

\begin{figure}[t]
  \centering
  \includegraphics[width=0.97\columnwidth]{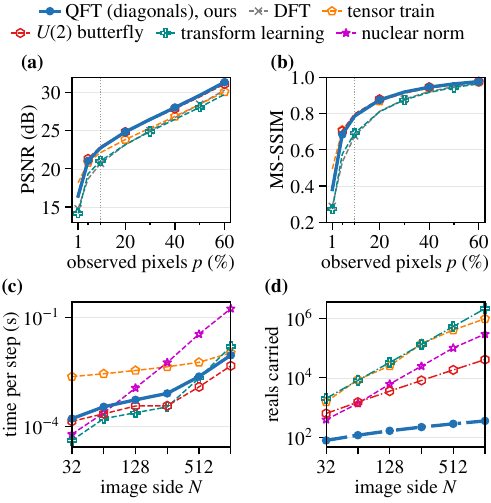}
  \caption{\textbf{(a)}, \textbf{(b)}~Mean PSNR and MS-SSIM over \num{100} test images vs.\ sampling
    rate (dotted: training rate). %
    \textbf{(c)}~Time per full-resolution step. %
    \textbf{(d)}~Reals each method carries, the transforms once, the per-image fits for every image.}
  \label{fig:results}
\end{figure}

\subsection{The coherence-sparsity trade-off}
\label{sec:hypothesis}

Theory provides a fundamental trade-off: a $k$-sparse object is recovered from $m$ random pixels when $m
\gtrsim \mu( \mathbf{U} )\, k\log N$~\cite{candes2007}. At fixed sampling, workable sparsity decreases
linearly with coherence. Wavelets achieve superior sparsity but suffer from high coherence---resulting in
worse inpainting PSNR than the less-sparse but low-coherence DFT. This reversal (wavelets win in
compression, lose in inpainting) cannot be fully attributed to coherence alone, since sparsity and
coherence vary together across bases. Isolating coherence's effect requires trainable transforms rather
than fixed bases. Standard coherence-reduction strategies also modify the sampling law~\cite{adcock2017},
which i.i.d.\ masks preclude.

The comparison uses fixed-$k$ hard-thresholding \cref{eq:iht}, applied uniformly to all methods. Soft
thresholding improves all bases---the DFT gains $1.0$dB at $10\%$ and $0.9$dB at $60\%$ sampling;
Symlet-8's high-rate deficit narrows from $3.0$ to $2.3$dB at $60\%$---but does not reverse the two
families' relative order.

\subsection{Freeing the QFT circuit: performance vs.\ structure}
\label{sec:whichres}

\cref{tab:reversal}'s final block of rows compares architectures with increasing degrees of freedom. The
phase-only model learns diagonal phase parameters while keeping Hadamards fixed. Freeing the diagonal
gates adds $+0.11$dB over phase-only updates. Freeing the Hadamards as well (replacing them with
learnable rotations, with both learning rates tuned) adds $+0.24$dB and $0.013$ MS-SSIM, but at a cost:
coherence rises with the Cayley step from $1.09$ to $1.31$, requiring Riemannian retractions at each
iteration. At the largest step attempted, training diverges ($\mu=697$, $12.5$dB). In contrast, the
diagonal subfamily (phases + fixed Hadamards) maintains $\mu = 1$ unconditionally with only $0.24$dB
loss, using plain Adam without retractions.

\section{Conclusion}
\label{sec:contentend}

The tension between sparsity and coherence in image inpainting motivated the design of trainable
transforms adapted to image structure. This work introduced quantum-inspired tensor-network circuits as
trainable transforms for the first time in this application. The diagonal QFT model achieved low
coherence through circuit topology alone, guaranteeing minimum coherence at every parameter value without
explicit penalties. This structural guarantee enabled efficient training with plain Adam and yielded
competitive performance with far fewer parameters than alternative learned transforms, demonstrating that
principled architectural choices can replace costly optimization constraints while establishing a new
paradigm of quantum-inspired methods to image reconstruction.

\clearpage
\section*{Acknowledgment}

S.~An's work was supported by JST SPRING, Japan, Grant Number JPMJSP2180. The authors thank \mbox{Jin-Guo
  Liu}, \mbox{Zhongyi Ni} and \mbox{Huanhai Zhou} of the Hong Kong University of Science and Technology
(Guangzhou) for discussions.

\balance
\printbibliography[heading=bibliography]

\end{document}